\newcommand\blfootnote[1]{%
  \begingroup
  \renewcommand\thefootnote{}\footnote{#1}%
  \addtocounter{footnote}{-1}%
  \endgroup
}
\newtheorem{Theorem}{Theorem}
\theoremstyle{definition}
\newtheorem{Definition}{Definition}
\newtheorem{Example}{Example}
\newtheorem{ Proof}{Proof}
\newcommand{\citep}[1]{\cite{#1}}
\renewcommand{\tilde}{\widetilde}
\def\beq{\begin{equation}}
\def\eeq{\end{equation}}
\def\beqa{\begin{eqnarray}}
\def\eeqa{\end{eqnarray}}
\def\beqan{\begin{eqnarray*}}
\def\eeqan{\end{eqnarray*}}
\DeclareMathOperator*{\argmin}{arg\,min}
\def\Cset{{\cal C}}
\def\Lset{{\cal L}}
\def\Uset{{\cal U}}
\def\Iset{{\cal I}}
\def\Sset{{\cal S}}
\newcommand{\Beam}{\text{\sf Beam}}
\begin{document}

\title{ \fontsize{23.9pt}{30pt}\selectfont On Single-User Interactive Beam Alignment in Millimeter Wave Systems: Impact of Feedback Delay}

\author{Abbas Khalili$^\dagger$, Shahram Shahsavari$^\ddagger$, Mohammad A. (Amir) Khojastepour$^\diamond$, Elza Erkip$^\dagger$. \\
$^\dagger$NYU Tandon School of Engineering, $^\ddagger$University of Waterloo, $^\diamond$NEC Laboratories America, Inc.. \\
Emails: $^\dagger$\{ako274, Elza\}@nyu.edu, $^\ddagger$shahram.shahsavari@uwaterloo.ca,  $^\diamond$amir@nec-labs.com.}

\maketitle
\begin{abstract}

Narrow beams are key to wireless communications in millimeter wave frequency bands. Beam alignment (BA) allows the base station (BS) to adjust the direction and width of the beam used for communication. During BA, the BS transmits a number of scanning beams covering different angular regions. The goal is to minimize the expected width of the uncertainty region (UR) that includes the angle of departure of the user. Conventionally, in interactive BA, it is assumed that the feedback corresponding to each scanning packet is received prior to transmission of the next one. However, in practice, the feedback delay could be larger because of propagation or system constraints. This paper investigates BA strategies that operate under arbitrary fixed feedback delays. This problem is analyzed through a source coding prospective where the feedback sequences are viewed as source codewords. It is shown that these codewords form a codebook with a particular characteristic which is used to define a new class of codes called \textit{$d-$unimodal} codes. By analyzing the properties of these codes, a lower bound on the minimum achievable expected beamwidth is provided. 
The results reveal potential performance improvements in terms of the BA duration it takes to achieve a fixed expected width of the UR over the state-of-the-art BA methods which do not consider the effect of delay. 
\blfootnote{This work is supported by
National Science Foundation grants EARS-1547332, 
SpecEES-1824434, and NYU WIRELESS Industrial Affiliates.}
\end{abstract}

\begin{IEEEkeywords}
Millimeter wave, Analog beam alignment, Interactive beam alignment, Non-interactive beam alignment, Contiguous beams.
\end{IEEEkeywords}

\section{Introduction}

Millimeter wave (mmWave) communication greatly improves throughput of wireless networks by using the wide bandwidths available at high frequencies \cite{mmWave-survey-nyu}. In order to establish a viable communication link in highly directional mmWave links and mitigate the high path-loss and intense shadowing, it is necessary to perform beamforming \cite{rappaport2011state}. Beamfomring methods concentrate the transmit power in a desired direction by utilizing narrow beams \cite{kutty2016beamforming}. 

It is known that mmWave channels are sparse and consist of only a few spatial clusters \cite{akdeniz2014millimeter}. Therefore, to reduce beamforming overhead and maximize system throughput, beam alignment (BA) (a.k.a. beam training and beam search) is used to find narrow beams aligned with the direction of the channel clusters \cite{giordani2018tutorial}. In BA, the wireless transceiver searches over the angular space through a set of scanning beams to localize the direction of the channel clusters, i.e., namely, the angle of arrival (AoA) and angle of departure (AoD) of the channel clusters at the receiver and transmitter sides, respectively. Moreover, due to high power consumption in mmWave systems it is often assumed that the transceivers only use one RF-chain during BA, a method known as analog BA. 

There is a large body of work on BA methods in the literature
\cite{khalili2020optimal,michelusi2018optimal,desai2014initial,hussain2017throughput,shah-pimrc,khosravi2019efficient,chiu2019active,shabara2018linear,klautau20185g,Song2019,Lalitha2018}. In general, BA strategies can be classified as \textit{interactive} BA and \textit{non-interactive} BA. To elaborate, let us consider the BA procedure at the transmitter whose objective is to localize the AoD of the channel. The transmitter sends a set of BA packets through a set of scanning beams to scan the angular space. In non-interactive BA, the transmitter does not receive any feedback from the receiver until all the BA  packets are sent. In interactive BA, however, the transmitter receives feedback during the transmission of the BA packets which allows it to refine the scanning beams and better localize the AoD of the channel compared to non-interactive BA. 

The problem of multi-user non-interactive BA is considered in \cite{khalili2020optimal} where we analyzed the BA problem through an information theoretic perspective and provided bounds on the minimum average  expected beamwidth of data beams allocated to the users given a fixed BA duration along with achievablity schemes. A more challenging problem is to consider the interactive case which necessitates optimally utilizing the feedback information during the BA. Prior research on interactive BA methods \cite{michelusi2018optimal,barati2016initial,giordani2016comparative,desai2014initial,hussain2017throughput,shah-pimrc,Shah1906:Robust,khosravi2019efficient,chiu2019active,shabara2018linear,klautau20185g,Song2019} consider no delay for the receiver's feedback on the scanning packets. However, this might not always be the case due to practical reasons such as processing delay at the transceivers. 

In this paper, we consider the problem of interactive analog BA at the base station (BS) in a single-user downlink system where the channel has one dominant cluster and the feedback to each transmitted BA packet is received after a fixed known delay. Due to practical constraints, we only look at the case where the beams are contiguous \cite{khalili2020optimal,michelusi2018optimal,desai2014initial}. Similar to \cite{khalili2020optimal}, we assume that the BA packets and feedback at the user and BS are received error free. As a result, at the end of the BA phase the BS can allocate a beam for the data communication which includes the AoD of the channel with probability one. We refer to the angular region of this beam as \textit{uncertainty region} (UR) on the channel AoD. Our objective is to minimize the expected width of the UR similar to \cite{khalili2020optimal}. 
Overview of the contributions of this paper is as follows:
\begin{itemize}
    \item We view the BA with feedback delay as a source coding problem in which the BS needs to ask $b$ yes/no questions where each question is an angular interval. We show that the resulting source codewords (feedback sequences) have a special characteristic using which we define a new family of codes that we name $d-$unimodal (Section~\ref{sec:BU}).
    
    \item We provide a lower bound on the minimum expected width of the UR given any arbitrary prior on the AoD by analyzing properties of $d-$unimodal codes. Through numerical evaluations, we show the potential improvement in terms of the number of required time-slots to achieve a certain angular resolution for the expected width of the UR when compared with state-of-the-art (Section~\ref{sec:LB}). 
\end{itemize}

\section{System Model and Preliminaries} 
\label{sec:sys}
In this section, we outline general system assumptions (\ref{sec:sys1} and \ref{sec:sys2}) and then provide the mathematical formulation of the problem (\ref{subsec:prb} and \ref{sec:sys4}).

\subsection{Network Model}
\label{sec:sys1}
We consider a single-user downlink communication in a single-cell mmWave system scenario. Motivated by previous works \cite{hussain2017throughput,Shah1906:Robust, shah-pimrc,michelusi-efficient} and experimental results \cite{akdeniz2014millimeter}, we assume that the channel has only a single dominant cluster. 
We denote the AoD corresponding with this cluster by $\psi$ which is unknown to the BS. In our setup, motivated by \cite{chiu2019active,michelusi2018optimal,khalili2020optimal}, we consider that the BS performs analog BA during which it is able to search one beam at a time while the user has an omnidirectional reception pattern. The goal is to find a small angular interval (i.e., UR) which contains $\psi$. We assume $\Psi \sim f_{\Psi} (\psi)$ for $\psi \in(0,2\pi]$ which accounts for the prior knowledge on the AoD (e.g., corresponding to the history of previously localized AoD in beam tracking applications).

Due to practical constraints, we only consider use of contiguous beams as in \cite{khalili2020optimal}. Similar to \cite{chiu2019active,michelusi2018optimal,khalili2020optimal}, we assume that the beams are ideal and use the \textit{sectored antenna} model from \cite{ramanathan2001performance}. In this model, each beam is characterized by a constant main-lobe gain and an angular coverage region (ACR). In the case of contiguous beams, this ACR is an angular interval inside $(0,2\pi]$ that is covered by the main-lobe. This model is often used in the literature (e.g. \cite{bai2015coverage,fund2017}) and is justified as the BSs are envisioned to use large antenna arrays which allows for beams close to ideal shape \cite{mmWave-survey-nyu}. 

\begin{figure}[t]
\centering
\includegraphics[width=0.7\linewidth, draft=false]{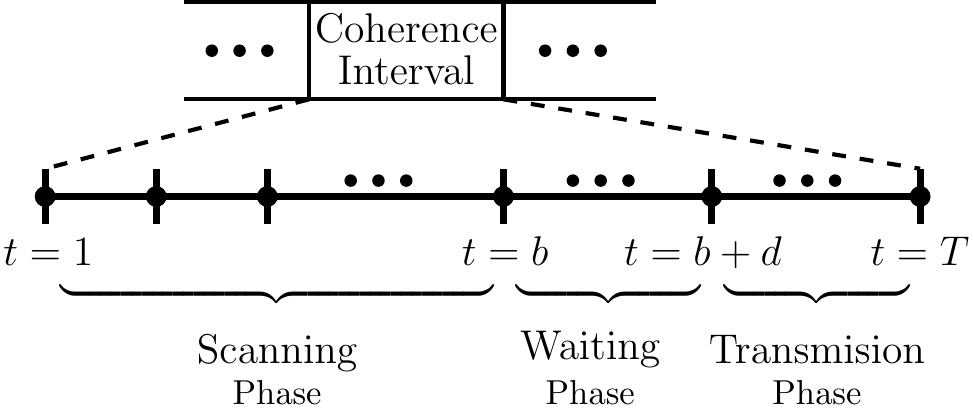}
\caption{Time-slotted system.}
\label{fig:sys_model}
\end{figure}

\subsection{Frames and Feedback}
\label{sec:sys2}

We consider an interactive BA scenario in which the BS receives feedback form the user during the transmission of BA packets and can change the subsequent scanning beams based on the feedback. Unlike conventional interactive BA in which the feedback to each transmitted packet is available instantaneously, we consider a fixed known delay of $d$ time-slots for each feedback. This delay accounts for practical constraints such as processing delay at the transceivers. If this delay cannot be accurately measured, an upper bound can be utilized for our analysis. We assume that the feedback to each packet is either an acknowledgement (ACK) that the packet was received by the user or a negative acknowledgement (NACK) which indicates the user did not receive the packet. Similar to \cite{michelusi2018optimal}, we consider that the feedback is received through a control channel and is error free \cite{mmWave-survey-nyu}. Also, as in  \cite{hussain2017throughput,khalili2020optimal}, we assume that the BA packets are detected at the user without error.

Motivated by the above discussion, we consider a time-slotted system in which the user has fixed AoD over coherence intervals of duration $T$ time-slots. We assume that the communication spanning a coherence interval includes three phases as shown in Fig.~\ref{fig:sys_model}. We first have the \textit{scanning phase} in which the BS transmits $b$ BA packets through a set of scanning beams to scan the angular space. 
Since the response to each packet takes $d$ time-slots, we consider a \textit{waiting phase} in which the BS waits to receive the feedback to the scanning beams. This phase lasts for $d$ time-slots and can be used for example, for data transmission to other users for which the BS has already performed BA. The rest of the coherence interval, i.e., the last $T-b-d$ time slots, is called \textit{transmission phase} which is used for data transmission.

Our main focus in this paper is the design of the beams to be used in the scanning phase and the resulting expected beamwidth for data transmission beam.

\subsection{Scanning Beam Set and Data Beam}
\label{subsec:prb}
The objective of BA is to maximize the beamforming gain which in turn maximizes the data communication rate. Towards this goal, we consider minimizing the expected width of the UR for the AoD of the user's channel. 

The BS uses $b$ scanning beams $\{\Phi_i\}_{i\in[b]}$ to transmit the $b$ BA packets \footnote{We use the notation $[n]$ to represent the set $\{1,2,\ldots,n\}$.}. Let $a_i \in \{0,1\}$ denote the feedback received for the $i^{\rm th}$ BA packet (i.e., $a_i = 1$ if ACK was received for $\Phi_i$ and $a_i = 0$ otherwise). Based on the received feedback sequence by the $i^{\rm th}$ time-slot (i.e., $(a_1,a_2, \ldots, a_{i-d})$), there are multiple choices for $\Phi_i$. To model this, we use a hierarchical beam set $\Sset = \{\Sset_i\}_{i \in [b]}$, where $\Sset_i = \{S_{i,m}\}_{m\in[M(i)]}$ denotes the set of all possible scanning beams given that there are a total of $M(i)\leq 2^{i-d}$ possible feedback sequences. To elaborate, note that by the $i^{\rm th}$ time-slot, the BS has received the feedback sequence corresponding to the scanning beams $\Phi_j, j\leq i-d$. We design the set $\Sset_i = \{S_{i,m}\}_{m\in[M(i)]}$ to contain a beam for each of the possible feedback sequences. Therefore, upon reception of a particular feedback sequence at the $i^{\rm th}$ time-slot, the BS selects the beam $S_{i,m}$ which was designed for that feedback sequence and uses it for transmission of the $i^{\rm th}$ BA packet (i.e., $\Phi_i = S_{i,m}$).

Given an AoD realization $\psi$, let us denote the ACR that the BS chooses for data transmission (i.e., UR) by $\Beam(\Sset,\psi)$. Under the assumption of single dominant path channel and error free system, the minimum length ACR which includes the user AoD is 
\begin{align}
\label{eq:phis}
\Beam(\Sset,\psi) = \cap_{i = 1}^b \Theta(\Phi_i,a_i),
\end{align}
where $\Theta(\Phi_i,a_i) = \Phi_i$ if $a_i = 1$ which corresponds to $\psi \in \Phi_i$, and $\Theta(\Phi_i, a_i) = (0,2\pi] - \Phi_i$ otherwise.

\subsection{Problem Formulation}
\label{sec:sys4}
We formulate the problem of minimizing the expected width of the UR for the AoD as
\begin{align}
\label{eq:optimization} 
\begin{aligned}
\Sset^* = \argmin_{\Sset} \mathbb{E}_{\Psi}[|\Beam(\Sset,\Psi)|],
\end{aligned}
\end{align}
where expectation is taken over the distribution $f_{\Psi}(\psi)$.

Based on \eqref{eq:phis}, given $\Sset$, we get an UR for each possible feedback sequence $(a_1, a_2, \ldots, a_b)$. Let us denote the set of possible URs for the AoD of the user by $\Uset = \{u_m\}_{m \in M(b)}$, where $M(b)\leq 2^b$ is the number of possible feedback sequences. It is easy to see that $\Beam(\Sset,\Psi)  = u_m$ for $\Psi \in u_m$. Hence, we can write the expectation in \eqref{eq:optimization} as 
\begin{align}
\label{eq:optimization2}
\mathbb{E}_{\Psi}[|\Beam(\Sset,\Psi)|] = \sum_{m=1}^{M(b)} |u_m|\int_{\psi\in u_m}\!\!\!\!f_{\Psi}(\psi) d\psi.
\end{align}
The notation $|u_m|$ is the Lebesgue measure of $u_m$, which is equal to the total width of the intervals in the case where $u_m$ is the union of a finite number of intervals. 

In the next sections, we will show that the feedback sequences can be viewed as source codewords with a special characteristic. This characteristic lets us define a new class of codes which we refer to by $d-$unimodal codes. Then, by studying these codes, we lower bound the minimum expected beamwidth in Sec.~\ref{sec:LB}. Explicit construction of optimal scanning beam set is reported elsewhere due to space constraint. 

\section{Beam Alignment and Unimodal Codes}
\label{sec:BU}

We view the discussed BA problem as a source coding problem in which the BS asks $b$ questions whose answers (the feedback sequences) represent the source codewords. Unlike a finite alphabet source coding problem, here the alphabet is continuous and the questions are intervals inside $(0,2\pi]$. 
In this section, we examine the properties of the aforementioned source code in our BA problem and define a new class of codes called \textit{$d-$unimodal}. 
We also establish the connection between the BA schemes and the design of $d-$unimodal codes.

To define $d-$unimodal codes, we need the following 

\begin{Definition}[\textbf{Unimodal Loop}]
A binary loop is called \emph{unimodal} iff the location of ones (if any) are consecutive \footnote{A \emph{loop} is a cyclically ordered set of elements \cite{novak1982cyclically} (i.e., the elements can be arranged on a circle). }.
\end{Definition}

As an example, the loop $\odot\{1,0,0,1\}$ is unimodal but the loop $\odot\{1,0,1,0\}$ is not \footnote{The notation $\odot\{\ldots\}$ indicates the loop of the ordered set $\{\ldots\}$.}. As we will elaborate later, unimodal loops represent the scanning beams in our BA problem. Now, we can define $d-$unimodal codes as follows:

\begin{Definition}[\textbf{$\mathbf{d-}$unimodal Code}] 
\label{def:uni}
A binary code (collection of codewords) with codewords of length $b$ is called \emph{$d$-unimodal} and is denoted by $\Cset(b,d)$, if there exists an ordered set of its codewords which could also include repetition of some codewords whose associated loop satisfies:
\begin{enumerate}
    \item For $i\leq d$, the loop created by the $i^{\rm th}$ bits of the codewords in the loop is unimodal.
    \item For $i>d$, for each sub-loop of the loop consisting only of codewords with same prefix of length $i-d$, the binary loop of the bits in the $i^{\rm th}$ position is unimodal.
\end{enumerate}
We refer to such loop as \emph{characteristic loop} of the code. The \emph{code cardinality} is the number of codewords in $\Cset$, denoted by $|\Cset|$. For example, the code $\Cset = \{11, 01, 10\}$ is a $2-$unimodal code with a characteristic loop of $ \Lset = \odot\{01, 11, 10\}$. More examples are provided later in the paper.
\end{Definition}

Characteristic loop of a code is not unique and may contain repetition of the codewords. For example, consecutively repeating a codeword in a characteristic loop generates another valid characteristic loop. A \emph{minimal characteristic loop (MCL)} is defined as one which does not contain any consecutive repetitions. Yet, an MCL may still contain repetitions that are not consecutive. For example, consider $\Cset = \{11, 01, 10\}$ with the characteristic loop $ \Lset = \odot\{11, 01, 11, 10\}$ which is minimal but contains repetition. 

As part of our first main result (Thm.~\ref{thm:1t1}),  we show that the feedback sequences in our BA problem form a $d-$unimodal code. Moreover, one can also find a construction that given a $d-$unimodal code, generates a scanning beam set $\Sset$ whose feedback sequences are that code. This second part forms the foundation of our explicit construction of optimal BA schemes and will be pursued elsewhere due to space constraint. 

Before providing the theorem statement, we first provide the necessary definitions and show through a set of examples how a scanning beam relates to a unimodal loop and how a given scanning beam set $\Sset$ leads to $d-$unimodal code. 

Suppose we are given a scanning beam set $\Sset$. The scanning beams inside this set, partition the interval $(0,2\pi]$ into a set of angular intervals which we call \emph{component beams}. We define these component beams as follows. Each scanning beam is an angular interval with two endpoints. After sorting the endpoints of all the scanning beams in $\Sset$ and removing the repetitions, each angular interval in between two consecutive endpoints is a component beam. Since the component beams are contiguous and partition the interval $(0,2\pi]$, one can use their positions on the circle and form a loop of the component beams. We denote this loop using $\Iset$. 
To better understand the notation and the relation between $\Sset$ and $\Iset$, let us consider the following example 
which we will build upon in the paper.

\begin{figure*}[t!]
    \centering
    \includegraphics[width = 0.7\textwidth]{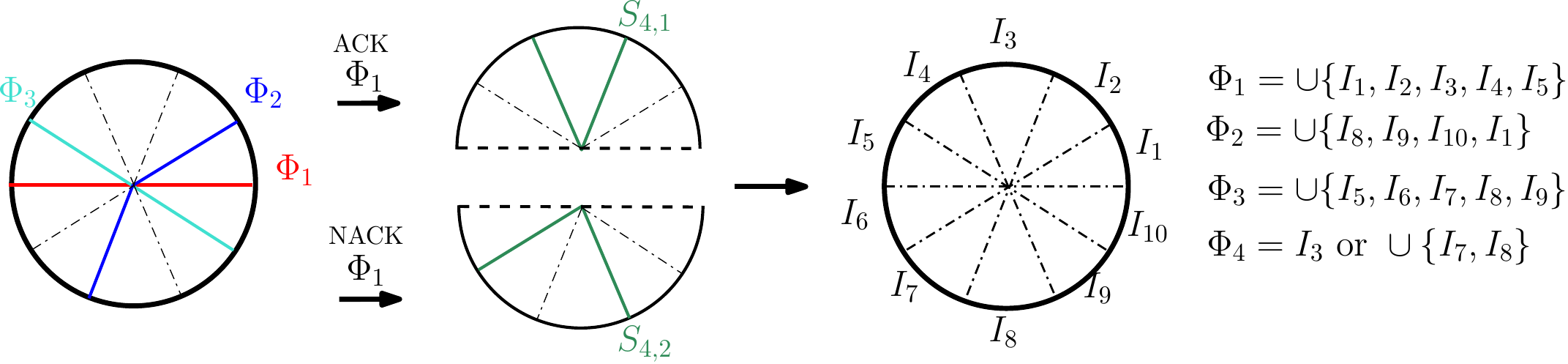}
     \caption{An example set of scanning beams for $b=4$ and $d=3$ and the corresponding component beams.}
    \label{fig:SCLI}
\end{figure*}

\begin{Example}
\label{ex:Iset}
Fig.~\ref{fig:SCLI} illustrates a possible set of scanning beams for $b=4$ and $d=3$. In this case, $\Sset = \{\Sset_1,\Sset_2,\Sset_3,\Sset_4\}$, with $\Sset_i = \{\Phi_i\}$ for $i \in \{1,2,3\}$ each consisting of a single possible scanning beam as no feedback is received prior to fourth time-slot. However, at the fourth time-slot, we receive the feedback to the beam $\Phi_1$ and so there are two possibilities for $\Phi_4$. Here,  we have $\Sset_4 = \{S_{4,1}, S_{4,2}\}$. As shown in Fig.~\ref{fig:SCLI}, the set $\Sset$ creates the component beam loop $\Iset = \odot\{I_1,I_2,\ldots,I_{10}\}$ which includes ten component beams.
\end{Example}

It is easy to see that each of the scanning beams in $\Sset$ can be written as union of subsets of component beams in $\Iset$. Consider one of these beams. By replacing the elements of the loop $\Iset$ with $1$ if the component beam is included in the beam and $0$ otherwise, we will have a binary loop. As a result, we can uniquely determine any beam in $\Sset$ using a binary loop given $\Iset$. Note that since the scanning beams are contiguous, they each include adjacent component beams and so the position of ones inside their binary loops are consecutive. Therefore, these binary loops are unimodal. 
To elaborate, consider the setup of Example~\ref{ex:Iset}. The beam $S_{4,2}$ in Fig.~\ref{fig:SCLI} is partitioned by component beams $I_7$ and $I_8$. Hence, its corresponding binary loop is $\odot\{0,0,0,0,0,0,1,1,0,0\}$ which is unimodal. 

Next, we show how these unimodal binary loops corresponding to the scanning beams lead to $d-$unimodal codes. For this purpose, let us first consider the following example. 

\begin{Example}
\label{ex:du}
Consider the setup in Example~\ref{ex:Iset}. If we replace the component beams in the loop $\Iset$ with their corresponding feedback sequences, we get the loop $\Lset= $  $\odot \{1100,$ $1000,$ $1001,$ $1000,$  $1010,$ $0010,$ $0011,$ $0111,$ $0110,$ $0100\}$. Let us look at the second bit of the codewords in $\Lset$ which leads to the binary loop $\odot \{1,$ $0,$ $0,$ $0,$ $0,$ $0,$ $0,$ $1,$ $1,$ $1\}$. This loop is unimodal and is the same binary loop representing the contiguous scanning beam $\Phi_2  =$ $\cup\{I_8,$ $I_9,$ $I_{10},$ $I_1\}$. Next, consider the loop of the fourth bits $\odot \{0,$ $0,$ $1,$ $0,$ $0,$ $0,$ $1,$ $1,$ $0,$ $0\}$. This is not a unimodal loop. Here, this loop corresponds to the beam $S_{4,1} \cup S_{4,2}$ which is not contiguous. Notice that each of these beams are contiguous. So if we can separate their binary loops, we should get unimodal loops. Note that the decision that which one of these beams is used for $\Phi_4$ is based on the feedback sequences received at the fourth time-slot which is $a_1$ in our case. Now, if we look at the sub-loops of the loop $\Lset$ whose feedback sequences have same value for $a_1$, we will get the binary loops $\odot \{0,$ $0,$ $1,$ $0,$ $0\}$ and  $\odot\{0,$ $1,$ $1,$ $0,$ $0\}$\footnote{A sub-loop of a loop is a loop in which some of the elements of the original loop are removed.}. These are both unimodal loops where the former and the latter correspond to  $S_{4,1}$ and $S_{4,2}$, respectively which are contiguous.
\end{Example}

Let us form a loop of binary codewords by replacing each component beam in loop $\Iset$ with its corresponding feedback sequence (i.e., $(a_1,a_2,\ldots, a_b)$) and denote it by $\Lset$. Following Example \ref{ex:du}, if we look at the loop consisting of the $i^{\rm th}$ bit of each codeword, we have a binary loop which relates to the scanning beams in $\Sset_i$. For $i\leq d$, since we have not received any feedback, $\Sset_i$ consists of only one contiguous scanning beam and the binary loop becomes unimodal. However, for $i>d$, this is no longer the case since there are multiple scanning beams in $\Sset_i$. Yet, similar to Example~\ref{ex:du}, if we create sub-loops of $\Lset$ that include feedback sequences of same prefix of $(a_1, a_2,\ldots, a_{i-d})$ and then look at the loop of the $i^{\rm th}$ bits for each sub-loop, we will get unimodal loops.
These claims are proved rigorously in the next theorem were we show that the loop $\Lset$ is in fact an MCL for the $d-$unimodal code whose codewords are the feedback sequences resulting from $\Sset$.

An interesting observation from the MCL created using the feedback sequences and the component beams loop is that when it has repetition, one or more of the URs are non-contiguous. The repetition of a codeword means there are multiple component beams with same feedback sequence and adjacent component beams of different feedback sequences. on the other hand, from Sec.~\ref{sec:sys4}, we know that each feedback sequence corresponds to an UR. Therefore, there is an UR that includes these component beams but not their adjacent which makes it non-contiguous. This is important since as we discussed in Sec.~\ref{sec:sys}, each UR is a possible data beam and the data beams are preferred to be contiguous. As an example of this observation, consider Example~\ref{ex:du}. The MCL has the repetition of the codeword $1000$. On the other hand, if we form the set of URs, we get $\Uset = \{u_m\}_{m=1}^9$, where $u_m = I_m$ for $m\in [10]\backslash2$ and $u_2 = \cup\{I_2,I_4\}$. The non-contiguous UR is $u_2$ whose feedback sequence is the repeated codeword $1000$.

\begin{Theorem}[\textbf{Beam Alignment and Unimodal Codes}]
\label{thm:1t1}
Consider the BA problem introduced in Section \ref{sec:sys} where the number of BA scanning packets is set to $b$ and the delay is set to $d$. Given any scanning beam set $\Sset$, the feedback sequences form a $d-$unimodal code $\Cset$ whose MCL $\Lset$ is the loop of binary codewords resulted from replacing the elements of the component beams $\Iset$ with their corresponding feedback sequences.
\end{Theorem}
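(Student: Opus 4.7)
The plan is to prove the theorem in two parts: (i) verify that $\Lset$ satisfies both unimodality conditions of Definition~\ref{def:uni}, so that the feedback sequences form a $d$-unimodal code with $\Lset$ as a characteristic loop; and (ii) verify that $\Lset$ contains no consecutive repetitions, so it is in fact a minimal characteristic loop. The key geometric fact underlying both parts is that the component beams in $\Iset$ tile $(0,2\pi]$ as consecutive arcs preserving the cyclic order of the circle; hence any contiguous angular interval $A$ intersected with any cyclically-ordered subset $B$ of component beams yields a sub-arc of $B$ that is contiguous in $B$'s inherited order. I would state and prove this cyclic-order lemma first, because both parts reduce to it.

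For the unimodality conditions, the case $i \leq d$ is immediate: no feedback has been received by time $i$, so $\Sset_i$ consists of a single contiguous scanning beam $\Phi_i$, and the cyclic-order lemma shows that the $i$-th bits along $\Lset$ have their ones in consecutive positions. For $i > d$, I would fix a prefix $q = (q_1,\ldots,q_{i-d})$ and restrict attention to the sub-loop of codewords in $\Lset$ sharing that prefix. All codewords in this sub-loop correspond to component beams at which the BS uses the same beam from $\Sset_i$, namely the one designed for feedback prefix $q$; since this beam is a contiguous arc, applying the lemma to the sub-loop produces a unimodal binary loop of $i$-th bits, verifying the second condition of Definition~\ref{def:uni}.

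For the minimality claim, I would show that any two consecutive component beams $I, I'$ of $\Iset$ have distinct feedback sequences. Their shared boundary point $p$ is, by construction, an endpoint of some scanning beam in $\Sset$. I would induct on the time index $j$: at any $j$, either the first $j-1$ feedback bits already differ across $p$ (in which case we are done), or they agree, and then the hierarchical structure of $\Sset$ forces the same beam $\Phi_j$ to be used on both sides of $p$, so $a_j$ flips precisely when $p$ lies on the boundary of that currently active beam. The induction bottoms out on the defining fact that $p$ lies on the boundary of at least one beam in $\Sset$, which therefore must eventually coincide with an active beam at some time index.

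The main obstacle will be Part (ii), because the scanning beam used at any time $j > d$ depends recursively on feedback bits that may themselves flip when crossing $p$. The endpoint $p$ lies on the boundary of some beam in $\Sset$, but not necessarily of the beam actively used at $I$ or $I'$, so the induction has to carefully align the first time index at which the active beams could diverge with a bit flip genuinely induced by $p$. Once this bookkeeping is in place, the theorem follows by combining Parts (i) and (ii) to conclude that $\Lset$ is a characteristic loop without consecutive repetitions, i.e., an MCL of the $d$-unimodal code formed by the feedback sequences.
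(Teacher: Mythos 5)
Your Part (i) is sound and follows essentially the same route as the paper: the paper likewise splits into $i\le d$ (a single contiguous beam $\Phi_i$, hence consecutive ones in the bit loop) and $i>d$ (restrict to the sub-loop of component beams sharing the length-$(i-d)$ prefix, on which a single contiguous beam $S_{i,m}$ is active). Stating the cyclic-order lemma explicitly is a reasonable formalization of what the paper uses implicitly, and it correctly covers the fact that the prefix-defined sub-loop $\tilde{\Iset}$ need not itself be a contiguous angular region.

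The genuine gap is in Part (ii), and it sits exactly where you flagged the "main obstacle." Your induction bottoms out on the claim that the beam of $\Sset$ whose endpoint is the shared boundary $p$ must eventually coincide with an active beam at some time index, and that claim is false for an arbitrary hierarchical set $\Sset$: a beam $S_{j,m'}$ designed for a feedback prefix $q'$ can have an endpoint $p$ lying inside a region where the realized prefix is never $q'$, so $S_{j,m'}$ is never active on either side of $p$ and no feedback bit flips across $p$. Concretely, take $b=2$, $d=1$, $\Phi_1=(0,\pi]$, $S_{2,1}=(0,\pi/2]$ (used when $a_1=1$) and $S_{2,2}=(\pi/4,3\pi/4]$ (used when $a_1=0$): the adjacent component beams $(0,\pi/4]$ and $(\pi/4,\pi/2]$ both receive codeword $11$, so the constructed loop has a consecutive repetition. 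The paper instead argues the contrapositive in one step --- if two adjacent component beams had identical feedback sequences, then every beam of $\Sset$ would contain both or neither, so their common endpoint could not be a component-beam boundary --- which avoids your active-beam bookkeeping entirely, but tacitly relies on the same hidden hypothesis, namely that each $S_{i,m}$ only places endpoints inside the uncertainty region of the prefix it is designed for. To close Part (ii) you must either import that non-degeneracy assumption explicitly or weaken the minimality claim; the induction as you describe it cannot be completed without it.
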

\begin{proof}
The proof is provided in Appendix A.
\end{proof}
This theorem shows that the collection of feedback sequences of any possible scanning beam set is a $d-$unimodal code. We will use this to lower bound the performance of the considered BA problem in terms of minimum expected beamwidth in the next section. 

\section{Lower Bound on Expected Beamwidth}
\label{sec:LB}
In this section, we investigate the properties of $d-$unimodal codes to lower bound the optimal performance in terms of expected beamwidth for our BA problem. To this end, 
we define a parent-child hierarchy between the codes $\Cset(b,d)$ and $\Cset(b-1,d)$ which we will use in our proofs. This hierarchy is formally defined below.

\begin{Definition}[\textbf{Parent Code}]
For a $\Cset(b,d)$ code with an MCL $\Lset(b,d)$, the loop containing the prefix of length $b-i$ of all the codewords in the loop is an MCL that defines a \emph{parent code of order $i$}, i.e.,  $\Cset(b-i,d)$. The parent code of order $1$ is simply called the \emph{parent code}. 
\end{Definition}
It can be inferred that given a code, its corresponding parent code is unique  and $d$-unimodal. However, a parent code can result in different child codes. Note that based on Thm.~\ref{thm:1t1}, given a scanning beam set, the resulting collection of feedback sequences is a $d-$unimodal code. Also, from Sec.~\ref{sec:sys4}, we know that the number of possible URs is the same as the number of possible feedback sequences. As a result, we can upper bound the number of URs (number of feedback sequences) by finding an upper bound for the cardinality of $d-$unimodal codes. 
In the next theorem, we use the parent-child hierarchy to bound the cardinality of $d-$unimodal codes which also gives us a bound on the number of URs. 

\begin{Theorem}[\textbf{Maximum Code Cardinality}]
\label{thm:opmax}
Let $M(b,d)$ denote the maximum cardinality for the code ${\Cset}(b,d)$. Then,
for $d =1$, $M(b,d) = 2^b$ and for $d>1$,
\begin{align}
\label{eq:maxd}
\begin{aligned}
M(b,d) &\leq \begin{cases}
M(b-1,d) + 2M(b-d,d) &b>d, \\
2b  &b \leq d.
\end{cases}
\end{aligned}
\end{align}
\end{Theorem}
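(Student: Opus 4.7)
The plan is to prove the theorem by induction on $b$, handling the three cases indicated in the statement separately. For $d=1$, I will exhibit a $1$-unimodal arrangement of all $2^b$ length-$b$ binary strings---concretely the binary reflected Gray code, whose recursive construction verifies Definition 2 for $d=1$ at every level---establishing $M(b,1) = 2^b$. For $b \leq d$ with $d > 1$, every bit position $i \in [b]$ satisfies $i \leq d$, so condition 1 of Definition 2 forces the $i$-th bit loop of the MCL to be unimodal, admitting at most $2$ transitions per bit. Since adjacent codewords in the MCL differ in at least one bit, the MCL length $L$ is bounded by the total transition count $\sum_i T_i \leq 2b$; combined with $|\mathcal{C}(b,d)| \leq L$ this yields $M(b,d) \leq 2b$.

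For the inductive case $b > d$ with $d > 1$, let $\mathcal{C}$ be any $d$-unimodal code of length $b$ with MCL $\mathcal{L}$. Truncating the last bit of each codeword in $\mathcal{L}$ and collapsing consecutive repetitions yields an MCL for the parent code $\mathcal{C}_p = \mathcal{C}(b-1,d)$, which is itself $d$-unimodal by the parent--child hierarchy discussed right before the theorem statement, so $|\mathcal{C}_p| \leq M(b-1,d)$. Each parent codeword has one or two children in $\mathcal{C}$; let $S$ denote the set of ``split'' parents with both $c'0$ and $c'1$ in $\mathcal{C}$. A simple inclusion--exclusion on the last-bit-partitioned subcodes $\mathcal{C}_0 = \{c \in \mathcal{C}: c_b=0\}$ and $\mathcal{C}_1 = \{c \in \mathcal{C}: c_b=1\}$ gives $|\mathcal{C}| = |\mathcal{C}_p| + |S|$, so proving the theorem reduces to the bound $|S| \leq 2 M(b-d,d)$.

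To bound $|S|$, iterate the parent construction $d-1$ more times to obtain the grandparent code $\mathcal{C}_{\mathrm{gp}} = \mathcal{C}(b-d,d)$, which is $d$-unimodal with $|\mathcal{C}_{\mathrm{gp}}| \leq M(b-d,d)$. By condition 2 of Definition 2 at bit $i=b$, within each sub-loop $L_\gamma$ of $\mathcal{L}$ (codewords with length-$(b-d)$ prefix $\gamma \in \mathcal{C}_{\mathrm{gp}}$) the $b$-th bit is unimodal, partitioning $L_\gamma$ into a ``$0$-arc'' and a ``$1$-arc''; any $c' = \gamma\tau \in S$ must place $\gamma\tau 0$ in one arc and $\gamma\tau 1$ in the other. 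The culminating claim is that the total number of splits across all grandparents satisfies $|S| = \sum_\gamma n_\gamma \leq 2\,|\mathcal{C}_{\mathrm{gp}}| \leq 2 M(b-d,d)$, even though per-grandparent counts $n_\gamma$ may individually exceed $2$.

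The main obstacle lies precisely in this aggregate bound. Small $d=3$ examples show that $n_\gamma$ can be as large as $d$ when $|\mathcal{C}_{\mathrm{gp}}|$ is small, so no per-grandparent inequality can suffice; the argument must operate globally, exploiting the fact that dense splits inside a single $L_\gamma$ force structural constraints on adjacent grandparent sub-loops through the unimodality of the intermediate bits $b-d+1,\ldots,b-1$ together with the minimality of $\mathcal{L}$. I expect the resolution will be a double-counting argument that injects each split into a distinct structural feature---such as a transition point of the $b$-th bit inside some $L_{\gamma'}$, or a boundary event between consecutive grandparent blocks in $\mathcal{L}$---while systematically accounting for these cross-grandparent interactions. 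Once this injection is set up, the recurrence $M(b,d) \leq M(b-1,d) + 2 M(b-d,d)$ follows immediately by substitution, and the base $M(b,d) \leq 2b$ for $b \leq d$ closes the induction.
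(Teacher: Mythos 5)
Your treatment of the two base cases is sound: the Gray-code argument for $d=1$ and, for $b\leq d$, the observation that every bit-loop of the MCL is unimodal (at most $2$ transitions each) while adjacent MCL elements must differ somewhere, giving $|\Cset|\leq|\Lset|\leq\sum_i T_i\leq 2b$, is correct and in fact more self-contained than the paper, which simply cites an external proposition for this case. The identity $|\Cset|=|\Cset_p|+|S|$ is also correct. The problem is the inductive step: you reduce the theorem to the bound $|S|\leq 2M(b-d,d)$ on the number of split parents, correctly observe that the per-grandparent count $n_\gamma$ can exceed $2$ (indeed, for $d=3$, $b=4$ the code $\{1001,1011,1111,1110,1010,1000\}$ is $3$-unimodal with characteristic loop in that order and has $n_\gamma=3$ splits over a single grandparent), and then explicitly leave the required global double-counting argument as a conjecture. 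That aggregate bound is the entire content of the inductive step, so as written the proof is incomplete at its central point, and it is not evident that the injection you sketch exists.

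The paper avoids this difficulty by counting a different quantity. Instead of tracking code cardinalities and split parents, it tracks the \emph{length of the characteristic loop}: truncating the last bit of every element of $\Lset(b,d)$ and collapsing consecutive repetitions produces an MCL of the parent code, and the number of elements lost equals the number of \emph{adjacent} pairs in $\Lset(b,d)$ that differ only in the last bit. Such a pair necessarily shares its length-$(b-d)$ prefix, hence lies (adjacently) in one of the sub-loops indexed by the order-$d$ parent code, and within each sub-loop the last-bit loop is unimodal, so it has at most $2$ adjacent transitions. This is exactly the step where your split-parent count fails: $|S|$ ignores adjacency, whereas unimodality only controls adjacent sign changes, which is why the paper's "$2$ per sub-loop" bound is immediate while yours is not. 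The resulting inequality $|\Lset(b,d)|\leq|\Lset(b-1,d)|+2\,(\textrm{number of sub-loops})$ together with $|\Cset|\leq|\Lset(b,d)|$ and the identification of the number of sub-loops with the cardinality of the order-$d$ parent code yields the recurrence. If you want to salvage your route, the cleanest fix is to replace $|S|$ by the number of adjacent last-bit merges in the MCL and run the recursion on MCL lengths rather than on code cardinalities.
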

\begin{proof}
The proof is provided in Appendix B.
A sketch of which is as follows. From Def.~\ref{def:uni}, we know that the cardinality of a $d-$unimodal code is less than or equal to the length of its MCL. Moreover, it is easy to see that the length of any MCL for $\Cset(1,d)$ is at most $2$. On the other hand, using parent-child hierarchy, we bound the difference between the cardinality of a child's MCL with its parent's MCL. Based on these, we calculate the upper bound in the theorem.
\end{proof}

Using the above results, we can bound the minimum expected beamwidth for UR as in the next theorem. 

\begin{Theorem}[\textbf{Minimum Expected Beamwidth}]
\label{thm:bounds}
The minimum expected beamwidth i.e., the objective function in optimization problem \eqref{eq:optimization} when contiguous scanning beams are used is bounded as 
\begin{align}
\label{eq:con}
\frac{2^{h(\Psi)}}{M(b,d)} \leq \mathbb{E}_{\Psi}[|\Beam(\Sset,\Psi)|]
\end{align}
\end{Theorem}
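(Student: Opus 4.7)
The plan is to use an information-theoretic argument tying the expected width of the UR to the differential entropy of $\Psi$ through the random index of the selected UR. Let $M$ be the (discrete) random variable denoting the index of the UR $u_m \in \Uset$ that contains $\Psi$. By Theorem~\ref{thm:1t1}, the collection of feedback sequences associated with $\Sset$ forms a $d$-unimodal code, so the number of distinct URs satisfies $|\Uset| \leq M(b,d)$ by Theorem~\ref{thm:opmax}. Because $M$ is a deterministic function of $\Psi$, the mixed chain rule gives $h(\Psi) = h(\Psi,M) = h(\Psi\MID M) + H(M)$.

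Next, I would bound each of the two terms on the right-hand side. For the discrete entropy, the standard uniform-support bound yields $H(M) \leq \log_2 |\Uset| \leq \log_2 M(b,d)$. For the conditional differential entropy, conditioned on $M=m$ the density of $\Psi$ is $f_\Psi(\psi)/p_m$ supported on $u_m$ (where $p_m = \int_{u_m} f_\Psi(\psi)\, d\psi$), so the maximum-entropy property of the uniform distribution on a set of Lebesgue measure $|u_m|$ gives $h(\Psi\MID M=m) \leq \log_2 |u_m|$. Averaging over $M$,
\begin{equation}
h(\Psi\MID M) = \sum_{m=1}^{|\Uset|} p_m\, h(\Psi\MID M=m) \leq \sum_{m=1}^{|\Uset|} p_m \log_2 |u_m|.
\end{equation}

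Finally, I would apply Jensen's inequality to the concave function $\log_2(\cdot)$ to obtain $\sum_m p_m \log_2 |u_m| \leq \log_2 \sum_m p_m |u_m|$, and recognize the right-hand side as $\log_2 \mathbb{E}_{\Psi}[|\Beam(\Sset,\Psi)|]$ via \eqref{eq:optimization2}. Combining everything,
\begin{equation}
h(\Psi) \leq \log_2 \mathbb{E}_{\Psi}[|\Beam(\Sset,\Psi)|] + \log_2 M(b,d),
\end{equation}
and exponentiating and rearranging yields \eqref{eq:con}. Since the bound holds for an arbitrary $\Sset$ using contiguous scanning beams, it holds for the minimizer $\Sset^*$ as well.

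The main obstacle, and the reason the earlier development is needed, is the step that links the combinatorial object (the number of URs) to $M(b,d)$: this is where Theorem~\ref{thm:1t1} and Theorem~\ref{thm:opmax} are essential, since without the $d$-unimodal structure one could only claim the loose bound $|\Uset|\leq 2^b$. The remaining inequalities (max-entropy on bounded intervals and Jensen) are standard; care must only be taken that $M$ is a function of $\Psi$ so that $H(M\MID \Psi)=0$ and the chain rule in mixed form is valid.
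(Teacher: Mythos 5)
Your proof is correct and follows essentially the same route as the paper: both arguments hinge on bounding the number of uncertainty regions by $M(b,d)$ via Theorems~\ref{thm:1t1} and~\ref{thm:opmax}, after which the paper simply invokes Proposition~2 of its prior work, whose content is exactly the entropy decomposition $h(\Psi) = H(M) + h(\Psi \MID M)$ together with the max-entropy and Jensen bounds that you spell out. The only difference is that you make that cited step explicit and self-contained.
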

\begin{proof}
From Thm.~\ref{thm:1t1} and Thm.~\ref{thm:opmax}, we observe that the maximum number of URs is bounded by $M(b,d)$. Using this with \cite[Prop. 2]{khalili2020optimal}, will give us the lower bound.
\end{proof}

\begin{figure}[t!]
    \centering \includegraphics[width = 0.75\linewidth]{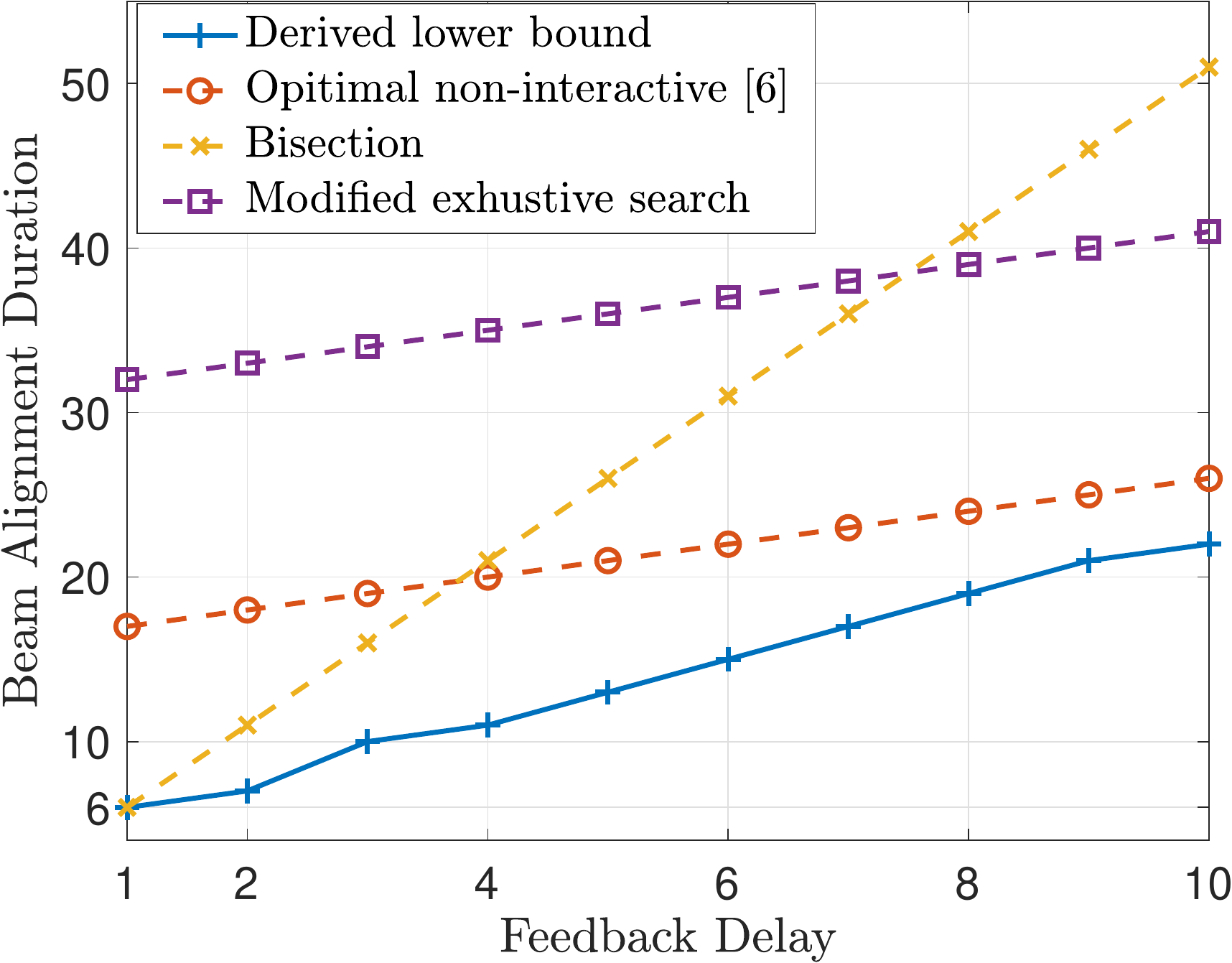} 
\caption{ Total BA duration for a given fixed expected data beamwidth resolution of $360/2^5 \approx 10^{\circ}$ for different BA methods and feedback delays for $\psi \sim \mathrm{Uniform}(0,2\pi]$.}
\label{fig:comp_naive}
\end{figure}

We conclude this section, by providing a comparison of the total (i.e., scanning phase $+$ waiting phase) BA duration that different BA methods and the derived lower bound require, given a fixed expected UR width for different values of feedback delay and $\Psi\sim \mathrm{Uniform}(0,2\pi]$. The result is plotted in Fig.~\ref{fig:comp_naive}. In the modified exhaustive search method, for a given $b$, we divide the $(0,2\pi]$ into $b+1$ equal width URs, and at each time-slot, scan one UR. Since the system is error free, we can find the UR including the AoD by only searching $b$ of $b+1$ URs. We observe that as the delay increases, the performance of bisection method which is optimal for case of $d=1$ rapidly degrades and after delay of $ d = 8$ time-slots, even the modified exhaustive search outperforms bisection method. This figure also shows that as the delay increases the lower bound becomes closer to the performance of the optimal non-interactive method \cite{khalili2020optimal}. In fact, if we allow for more delays, they become exactly the same. The reason is that the optimal non-interactive method in \cite{khalili2020optimal} is a special case of our problem for $d>b$. This plot also suggests that there is potential of improving the performance of the state-of-the-art methods using an appropriate BA scheme. In fact, the proposed framework can also be used to construct the optimal BA method achieving the lower bound in the Fig.~\ref{fig:comp_naive}. 
Details and the derivation of optimal BA solution are left for future publication due to space constraint.

\section{Conclusion}
In this paper, we have investigated the single-user analog BA, where there is a fixed delay between each transmitted BA packet and its corresponding feedback. We have proposed a general framework for this problem using $d-$unimodal codes. We have shown that the feedback sequences form a class of codes we refer to by $d-$unimodal codes, using which we have derived a lower bound on the minimum feasible expected width of the URs. Furthermore, through numerical evaluation, we have shown the possibility of performance improvement over the state-of-the art methods in terms of BA duration required to achieve a fixed expected width for the UR. 

\bibliographystyle{IEEEtran}
\bibliography{bibl}

\appendices
\section{Proof of Theorem \ref{thm:1t1}}
\label{app:1t1}
Let us denote the loop resulting from replacing each component beam in $\Iset$ with its corresponding feedback sequences by $\Lset$.
To prove the theorem statement, we show that $\Lset$ firstly, does not have any consecutive repetitions and secondly, is a characteristic loop for a code consisting of all the feedback sequences of $\Sset$. We prove the second part by showing $\Lset$ satisfies each of the conditions in Def.~\ref{def:uni}. 

\textit{No consecutive repetitions:} We show this by contradiction. Assume $\Lset$ has a consecutive repetition. This means two consecutive component beams lets say $I_1$ and $I_2$ have same feedback sequences. If so, all the scanning beams in $\Sset$ either include both $I_1$ and $I_2$ or none. 
Therefore, if we form the component beams of $\Sset$, we should get $I_1 \cup I_2$ as a component beam instead of two separate component beams $I_1$ and $I_2$. This is a contradiction and so $\Lset$ does not have any repetitions.

To prove that $\Lset$ is a characteristic loop of a code consisting of all the feedback sequences. First, note that the loop $\Lset$ includes all the possible feedback sequences as the component beams of $\Iset$ partition the entire $(0, 2\pi]$. Next, we show that $\Lset$ satisfies the conditions in Def.~\ref{def:uni}. 

\textit{Def.~\ref{def:uni}, first condition:} Assume $i\leq d$, since no feedback is received, the set $\Sset_i$ includes only one scanning beam. Let us denote this beam by $\Phi_i$. By construction of component beams loop $\Iset$, $\Phi_i$ can be written as union of subset of component beams in $\Iset$. It is easy to see that if a component beam is included in $\Phi_i$, the feedback sequence corresponding to that component beam would have one in the $i^{\rm th}$ position and otherwise zero. Now, Let us consider the binary loop derived from the $i^{\rm th}$ elements of the feedback sequences in $\Lset$ and denote it by $\Lset_i$. Based on our discussions, in $\Lset_i$, at the position of the component beams which are included in $\Phi_i$, we should have one and otherwise we should have zero. Since $\Phi_i$ is contiguous, the component beams included in $\Phi_i$ are adjacent and so the positions of ones in $\Lset_i$ are consecutive. Therefore, $\Lset_i$ is by definition unimodal and the condition holds.

\textit{Def.~\ref{def:uni}, second condition:} Suppose that we are at the $i^{\rm th}$ time-slot of BA where $i\geq d$. Without loss of generality assume that the BS has received the feedback sequence $(a_1, a_2, \ldots a_{i-d})$ for which it uses the scanning beam $S_{i,m}$ from $\Sset_i$ for some $m\in[M(i)]$. We need to show that if we form the sub-loop of the feedback sequences in $\Lset$ that have the prefix $(a_1, a_2, \ldots, a_{i-d})$, the binary loop of the $i^{\rm th}$ bits is unimodal. Let us denote this binary loop by $\tilde{\Lset}_i$.

Note that the scanning beams $\Phi_j, j \leq i-d$ and the considered feedback sequence $(a_1,a_2, \ldots,a_{i-d})$ determine an angular region that includes the AoD of the user at the $i^{\rm th}$ time-slot. By construction of the component beam loop $\Iset$, we can write this angular region as a union of subset of the component beams in $\Iset$. Given this, let us form a sub-loop of $\Iset$ by removing the component beams which are not included in this angular region and name it $\tilde{\Iset}$. Consider one of the component beams inside this sub-loop and denote it by $\tilde{I}_1$. It is straight forward to see that the $i^{\rm th}$ bit of the feedback sequence $(a_1,a_2, \ldots,a_{b})$ corresponding to $\tilde{I}_1$ is one if  $ \tilde{I}_1\in S_{i,m}$ and zero otherwise. Therefore, if we replace the component beams inside $\tilde{\Iset}$ with their feedback sequences and look at the loop of $i^{\rm th}$ bits it would be unimodal since $S_{i,m}$ is contiguous. Next, we show that this loop is the loop $\tilde{\Lset}_i$. To show this, note that $\tilde{\Iset}$ is a sub-loop of $\Iset$ and includes all and only component beams of $\Iset$ whose feedback sequences have the prefix $(a_1, a_2, \ldots, a_{i-d})$. Therefore, if we replace the component beams inside $\tilde{\Iset}$ with their corresponding feedback sequences and form the binary loop of the $i^{\rm th}$ bits by definition it will be the loop $\tilde{\Lset}_i$.

\section{Proof of Theorem~\ref{thm:opmax}}
\label{app:opmax}
For the case of $d=1$, we know that the maximum number of feedback sequences using $b$ yes/no questions is $2^b$ which is also achievable (e.g., bisection method). Therefore, $M(b,1) = 2^b$.

For $d>1$, we first bound the cardinality of the MCL $\Lset(b,d)$, and then bound $M(b,d)$ using Def.~\ref{def:uni} which indicates $M(b,d) \leq  |\Lset(b,d)|$.

Looking at the Def.~\ref{def:uni}, by grouping the codewords in the MCL $\Lset(b,d)$ into sub-loops whose codewords have same prefix of length $b-d$, the loop of the last bit of the codewords in each sub-loop becomes unimodal. On the other hand, we know that if we remove the last bit of all the codewords in the MCL $\Lset(b,d)$ and eliminate the consecutive repetitions of the created codewords, by definition of the parent code, we will get an MCL $\Lset(b-1,d)$. Note that if no consecutive repetitions are caused by removing the last bit of the codewords, we would have $|\Lset(b,d)| = |\Lset(b-1,d)|$. However, if there were codewords to be eliminated due to the consecutive repetitions, we would get $|\Lset(b,d)| > |\Lset(b-1,d)|$. 
So, by finding the maximum possible reduction of codewords due to the consecutive repetitions, we can find the maximum possible difference between $|\Lset(b,d)|$ and $|\Lset(b-1,d)|$. To count this, observe that the MCL $\Lset(b,d)$ cannot have any consecutive repetitions by definition. Therefore, if the bits removed from two consecutive codewords in $\Lset(b,d)$ are the same, they cannot lead to consecutive repetitions in $\Lset(b-1,d)$. Also, removing the last bit of the consecutive codewords in $\Lset(b,d)$ which have different prefixes of length $b-d$ does not lead in consecutive repetitions either since the codewords don't have the same first $b-d$ bits.
As a result, the only way that consecutive repetitions might happen is when the last bits are not the same and the codewords have same prefix of length $b-d$. Suppose we group the codewords into sub-loops that have same prefix of length $b-d$. We know that the loop of the last bits in each sub-loop is unimodal. Moreover, the maximum number of times that two consecutive bits in a unimodal loop can be different is $2$. Therefore, 
\begin{align}
\label{eq:mclb1}
 |\Lset(b,d)| \leq  |\Lset(b-1,d)| + 2(\textrm{number of sub-loops}). 
\end{align}
The number of sub-loops of codewords with same prefix of length $b-d$ for any possible MCL of $C(b,d)$ is by definition of a parent code equal to $|C(b-d,d)|$. Combining this with \eqref{eq:mclb1} and using the fact that $M(b-d,d) \leq |\Lset(b-d,d)|$, we get 
\begin{align}
\label{eq:mclb2}
 M(b,d) \leq |\Lset(b,d)| \leq  |\Lset(b-1,d)| + 2 |\Lset(b-d,d)|. 
\end{align}
When $b\leq d$, we can conclude form in \cite[Prop.~6]{khalili2020optimal} that $M(b,d) = |\Lset(b,d)| = 2b$. Therefore, 
\begin{align}
\label{eq:mclb3}
 M(b,d) = |\Lset(b,d)| = 2b. \quad \textrm{for } b\leq d
\end{align}
Combining this with \eqref{eq:mclb2}, we get the bound in the theorem which concludes the proof.
\end{document}